\definecolor{myurlcolor}{rgb}{0,0,0.7}
\definecolor{myrefcolor}{rgb}{0.8,0,0}
\newcommand{\ket}[1]{|#1\rangle}
\definecolor{nblue}{rgb}{0.2,0.2,0.7}
\definecolor{ngreen}{rgb}{0.2,0.6,0.2}
\definecolor{nred}{rgb}{0.8,0.2,0.2}
\definecolor{nblack}{rgb}{0,0,0}
\newtheorem{theorem}{Theorem}
\newtheorem{lemma}{Lemma}
\newtheorem{proposition}{Proposition}
\begin{document}

\title{Unbounded randomness certification using sequences of measurements}

\author{F. J. Curchod}
\email{Florian.Curchod@icfo.es}
\affiliation{ICFO-Institut de Ciencies Fotoniques, The Barcelona Institute of Science and Technology, 08860 Castelldefels (Barcelona), Spain}
\author{M. Johansson}
\email{Markus.Johansson@icfo.es}
\affiliation{ICFO-Institut de Ciencies Fotoniques, The Barcelona Institute of Science and Technology, 08860 Castelldefels (Barcelona), Spain}
\author{R. Augusiak}
\affiliation{Center for Theoretical Physics, Polish Academy of Sciences, Al. Lotnik\'ow 32/46, 02-668 Warsaw, Poland}
\author{M. J. Hoban}
\affiliation{Department of Computer Science, University of Oxford, Oxford OX1 3QD, United Kingdom.}
\author{P. Wittek}
\affiliation{ICFO-Institut de Ciencies Fotoniques, The Barcelona Institute of Science and Technology, 08860 Castelldefels (Barcelona), Spain}
\affiliation{University of Bor\r{a}s, 50190 Bor\r{a}s, Sweden}
\author{A. Ac\'{i}n}
\affiliation{ICFO-Institut de Ciencies Fotoniques, The Barcelona Institute of Science and Technology, 08860 Castelldefels (Barcelona), Spain}
\affiliation{ICREA--Instituci\'{o} Catalana de Recerca i Estudis
Avan\c{c}ats, E--08010 Barcelona, Spain}

\date{\today}
\pacs{03.65.Ud, 03.67.Mn}

\begin{abstract}

Unpredictability, or randomness, of the outcomes of measurements made on an entangled state can be \textit{certified} provided that the statistics violate a Bell inequality. In the standard Bell scenario where each party performs a single measurement on its share of the system, only a finite amount of randomness, of at most $4\log_2 d$ bits, can be certified from a pair of entangled particles of dimension $d$. Our work shows that this fundamental limitation can be overcome using sequences of (nonprojective) measurements on the same system. More precisely, we prove that one can certify \textit{any} amount of random bits from a pair of qubits in a pure state as the resource, even if it is arbitrarily weakly entangled. In addition, this certification is achieved by near-maximal violation of a particular Bell inequality for each measurement in the sequence.

\end{abstract}

\maketitle

\textit{Introduction}.---Bell's theorem \cite{Bell1964} has shown that the predictions of quantum mechanics demonstrate non-locality. That is, they cannot be described by a theory in which there are objective properties of a system prior to measurement that satisfy the no-signalling principle (sometimes referred to as ``local realism"). Thus, if one requires the no-signalling principle to be satisfied at the operational level then the outcomes of measurements demonstrating non-locality must be unpredictable \cite{Bell1964,PR,Masanes2006}. This unpredictability, or randomness, is not the result of ignorance about the system preparation but is \textit{intrinsic} to the theory.

Although the connection between quantum non-locality (via Bell's theorem) and the existence of intrinsic randomness is well known \cite{Bell1964,PR,BellReview,Masanes2006} it was analyzed in a quantitative way only recently \cite{Randomness,Colbeck}. It was shown how to use non-locality (probability distributions that violate a Bell inequality) to \textit{certify} the unpredictability of the outcomes of certain physical processes. This was termed \textit{device-independent randomness certification}, because the certification only relies on the statistical properties of the outcomes and not on how they were produced. The development of information protocols exploiting this certified form of randomness, such as device-independent randomness expansion \cite{Randomness,Colbeck,Vazirani} and  amplification protocols~\cite{CR,Gallego}, followed.

Entanglement is a necessary resource for quantum non-locality, which in turn is required for randomness certification. It is thus crucial to understand qualitatively and quantitatively how these three fundamental quantities relate to one another. In our work, we focus on asking how much certifiable randomness can be obtained from a single entangled state as a resource. Progress has been made in this direction for entangled states shared between two parties, Alice ($A$) and Bob ($B$), in the standard scenario where each party makes a single measurement on his share of the system and then discards it. An argument adapted from Ref. \cite{DAriano} shows that either of the two parties, A or B can certify at most $2\textrm{log}_2 d$ bits of randomness \cite{Acin2015}, where $d$ is the dimension of the local Hilbert space the state lives in, which in turn implies a bound of $4\textrm{log}_2 d$ bits when the two outputs are combined. This demonstrates a fundamental limitation for device-independent randomness certification in the standard scenario. The main goal of our work is to show that this limitation on the amount of certifiable random bits from one quantum state can be lifted. To do this we will consider the sequential scenario, where sequences of measurements can be applied to each local system. Our main result is to prove that an unbounded amount of random bits can be certified in this scenario.

To gain intuition, consider the following set-up where, contrary to the device-independent approach followed here, the functioning of a device can be entirely trusted. The device consists of a quantum state prepared in the Pauli-$Z$, or $\sigma_{z}$ eigenstate $|0\rangle$ followed by a measurement in the Pauli-$X$, or $\sigma_{x}$ basis $\{|\pm\rangle=(|0\rangle\pm|1\rangle)/\sqrt{2}\}$. The outcome of this measurement is random and if the device then makes another measurement on the output state, this time in the Pauli-$Z$ basis, it gives yet another random outcome. In this fashion of alternating between the two orthogonal bases, one can potentially obtain an unbounded number of random bits from one qubit. The limitation of this procedure for producing random numbers is that one cannot distinguish this device from a classical one with pre-programmed outcomes---a \textit{local model} for the outcomes---if one does not fully trust the functioning of the device.

Clearly we cannot \textit{certify} any randomness from a single system (in a device-independent manner) as in the above example, since one needs non-locality for this purpose. But is it possible to build a scheme, that exploits non-locality and makes use of this idea of measuring the state repeatedly, to overcome the bound on the amount of certifiable randomness that one can obtain from a single entangled quantum system? To do so, the main obstacle comes from the fact that the local measurements needed to generate the random outcomes destroy the entanglement present in the state (and non-locality in the correlations). Thus, one of the challenges is to come up with non-destructive measurements that still produce non-locality but retain some entanglement in the post-measurement state. In this way, the state can still be used as a resource for subsequent measurements.

Bell tests with sequences of measurements have received less attention than the standard ones with a single measurement round in the literature despite the novel features in this scenario \cite{GallegoSeq}, as for example the phenomenon known as hidden nonlocality \cite{Popescu}. In our work we show that they prove useful in the task of randomness certification, which also provides another example~\cite{Acin2015} where general measurements can overcome limitations of projective ones. More precisely, we describe a scheme where any number $m$ of random bits are certified using a sequence of $n > m$ consecutive measurements on the same system. This work thus shows that the bound of $4\textrm{log}_2d$ random bits in the standard scenario can be overcome in the sequential scenario, where it is impossible to establish any bound. The unbounded randomness is certified by a near-maximal violation of a particular Bell inequality for each measurement in the sequence. Moreover, for any finite amount of certified randomness, our protocol has a finite (yet very small) noise robustness.

\textit{The sequential measurements scenario.}---Before presenting our results, let us introduce the scenario we work in. We carry over many of the features from the standard scenario except now we allow party $B$ to make multiple measurements in a sequence on his share of the state. One can visualize this as in Fig. \ref{Fig:scen} where $B$ is split up into several $B$s, each one corresponding to a measurement made on the state and labeled by $B_i$, $i \in \{1,2,..,n\}$, where $n$ is the total number of measurements made in the sequence. Each $B_i$ makes one measurement and the post-measurement state is sent to $B_{i+1}$. We organize the Bobs such that $B_i$ is doing his measurement \textit{before} $B_j$ for $i<j$. Thus in principle $B_j$ can receive the information about the inputs and outputs of previous measurements $B_i$ for all $i<j$.  

\begin{figure}[h!]
\scalebox{0.24}{\includegraphics{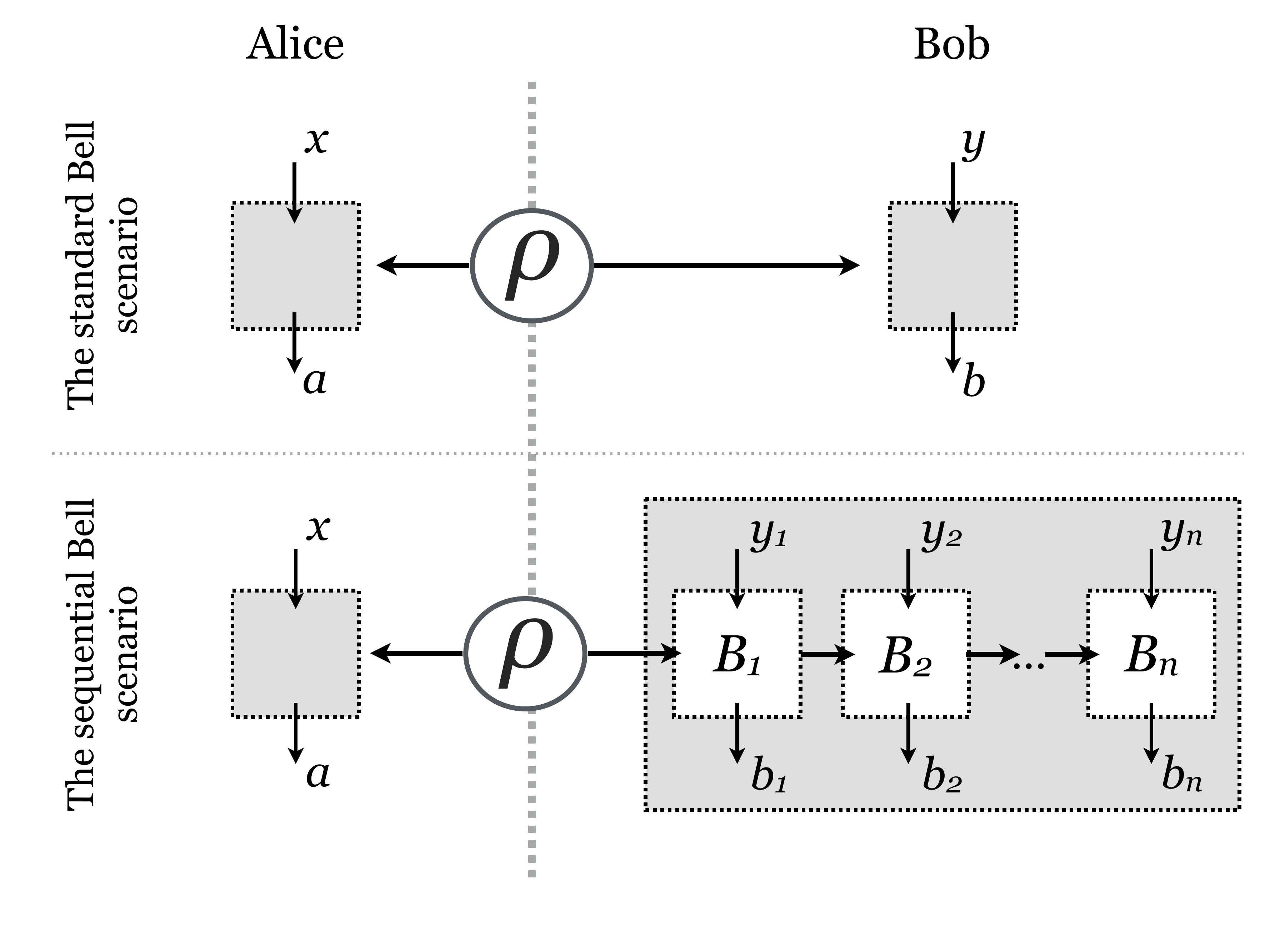}}
\caption{\label{Fig:scen}
The standard scenario, where parties $A$ and $B$ make a single quantum measurement on their share of the state and discard it versus the sequential scenario where the second party $B$ makes multiple measurements on his share.}
\end{figure}

To quantify the randomness produced in the setup, we put the above scenario in the setting of \textit{non-local guessing games} (e.g. Refs. \cite{Acin2012, Nieto, DeLaTorre, Acin2015}). Let us consider an additional adversary Eve ($E$) who is in possession of a quantum system potentially correlated to the one of $A$ and $B$. The global state is denoted $\rho_{ABE}$. We assume that at each round of the experiment, $E$ is the one preparing the state $\rho_{ABE}$ and distributes $\rho_{AB} = \mathrm{Tr}_E \rho_{ABE}$ to $A$ and $B$. This state will be used to make the measurements in the sequence and the aim of $E$ is to try to guess $B$'s outcomes by using measurements on her share of the state $\rho_{ABE}$. The parties $A$ and $B_i$s, having no knowledge about the state or the real measurements made on it,  see their respective devices as black boxes that receive some classical input $x \in \{0,1\}$ and $y_i \in \{0,1\}$, $y_1,y_2,..,y_n \equiv \vec{y}$, respectively, and that
 generate a classical output $a \in \{\pm 1\}$ and $b_i \in \{\pm 1\}$, $(b_1,b_2,..,b_n) \equiv \vec{b}$, respectively (see Fig. \ref{Fig:scen}). They generate statistics from multiple runs of the experiment to obtain the observed probability distribution $P_{\textrm{obs}}$ with elements $p_{\textrm{obs}}(a,\vec{b}|x,\vec{y})$. This distribution $P_{\textrm{obs}}$ lives inside the set of quantum correlations $\mathcal{Q}$ obtained from measurements on quantum states in a sequence as we described. This set is convex and thus can be described in terms of its extreme points, denoted $P_{\textrm{ext}}$, and any $P_{\textrm{obs}}$ can be written as  $P_{\textrm{obs}}=\sum\limits_{\textrm{ext}}q_\textrm{ext}P_{\textrm{ext}}$, where $\sum\limits_{\textrm{ext}}q_\textrm{ext}=1$ and every $q_\textrm{ext}\geq 0$.

From studying the outcome statistics \textit{only} we can bound $E$'s predictive power by allowing her to have complete knowledge of how $P_{\textrm{obs}}$ is decomposed into extreme points, i.e., she knows the probability distribution $q_\textrm{ext}$ over extreme points $P_{\textrm{ext}}$. This predictive power is quantified via the \textit{device-independent guessing probability} (DIGP)~\cite{Acin2012} where we fix the particular input string $y_1^0,y_2^0,..,y_n^0 \equiv \vec{y}^0$ for which $E$ has to guess the outputs $\vec{b}$. The DIGP, denoted by $G(\vec{y}^0,P_{\textrm{obs}})$, is then calculated as the optimal solution to the following optimization problem \cite{DeLaTorre,Nieto}:
\begin{align}
& G(\vec{y}^0,P_{\textrm{obs}})= \max_{\{q_\textrm{ext},P_{\textrm{ext}}\} } \sum_{\textrm{ext}} q_\textrm{ext} \max_{\vec{b}}p_{\textrm{ext}}(\vec{b}|\vec{y}^0) \nonumber \\
 &\text{subject to:} \nonumber \\
 &p_{\textrm{ext}}(\vec{b}|\vec{y}^0)=\sum_{a}p_{\textrm{ext}}(a,\vec{b}|x,\vec{y}^0),\hspace{1cm}\forall x\\
&P_{\textrm{obs}}=\sum_{\textrm{ext}}q_\textrm{ext}P_{\textrm{ext}},\hspace{2.3cm}P_{\textrm{ext}}\in \mathcal{Q}. \label{sumupto}
\end{align}
The operational meaning of this quantity is clear: Eve has a complete description of the observed correlations in terms of extreme points. She then guesses the most probable outcome for each extreme point. The standard scenario with a single measurement round can also be represented in this formalism by simply considering that $\vec{b} = b$ and $\vec{y}^{(0)} = y^{(0)}$. To quantify the amount of bits of randomness that is certified, we use the \textit{min entropy} $H(\vec{y}^0,P_{\textrm{obs}})=-\log_{2}G(\vec{y}^0,P_{\textrm{obs}})$ which returns $m$ bits of randomness if $G(\vec{y}^0,P_{\textrm{obs}})=2^{-m}$. The amount of bits of randomness quantified in this way is the figure of merit in this work and our goal is to obtain as many bits as possible from a single system.

In what follows, problem \eqref{sumupto} is relaxed to an optimization where instead of insisting on $P_{\textrm{obs}}=\sum\limits_{\textrm{ext}}q_\textrm{ext}P_{\textrm{ext}}$ \eqref{sumupto}, we only impose that the observed statistics $P_{\textrm{obs}}$ give a particular Bell inequality violation \cite{Randomness}. The optimal solution to this new problem is an upper bound to the optimal solution of \eqref{sumupto}. Crucially, this relaxation still gives good bounds as shown below.

Before presenting our results, it is worth explaining why the causal constraints imposed by the sequential scenario make it stronger than standard Bell tests. At first sight, one could be tempted to group all the measurements in the sequence into a single box receiving an input string $\vec{y}_n$ to output another string $\vec{b}_n$, as in a standard Bell test. However, in general a sequence of measurements can not be represented as a single measurement. To understand this, note that in the sequential scenario the outcome $b_i$ can depend only on variables produced in its past, namely the input choices $y_1,y_2,...,y_i$ and the outcomes $b_1,b_2,...,b_{i-1}$ that were \textit{previously} obtained. However, in the single measurement scenario, the measurement box receives all inputs and produces all outputs at once. In particular, outcome $b_i$ can now be a function of input choices $y_{j>i}$ and outcomes $b_{j>i}$ that are produced in the \textit{future}. That is, such a big box may violate the physical constraints coming from the sequential arrangement and the assumption that signaling from the future to the past is impossible. These additional causality constraints further limit Eve's predictability with respect to a standard Bell test and are responsible of the unbounded amount of certified randomness.

\textit{The ingredients.}--- Alice and Bob share the pure two-qubit state
\begin{equation}\label{Eq:qubits}
	\ket{\psi(\theta)} = \cos(\theta)\ket{00}+\sin(\theta)\ket{11}
\end{equation}
that for all $\theta \in ]0,\pi/2[$ is entangled. In Ref. \cite{Acin2012}, a family of Bell inequalities was introduced: 
\begin{equation}\label{Eq:Ibeta}
I_{\theta} = \beta\langle \mathbb{B}_0 \rangle + \langle \mathbb{A}_0\mathbb{B}_0 \rangle + \langle \mathbb{A}_1\mathbb{B}_0 \rangle\\ + \langle \mathbb{A}_0\mathbb{B}_1 \rangle - \langle \mathbb{A}_1\mathbb{B}_1 \rangle
\end{equation}
where $\beta=2\cos(2 \theta)/[1+\sin^2(2 \theta)]^{1/2}$,  $\langle \mathbb{B}_y \rangle = p(b=+1|y)-p(b=-1|y)$ and $\langle \mathbb{A}_x \mathbb{B}_y \rangle = p(a=b|xy)-p(a \neq b|xy)$ for $x$, $y\in\{0,1\}$.
This family of inequalities has the following two useful properties: first, its maximal quantum violation, $I_{\theta}^{\max} = 2\sqrt{2}\sqrt{1+\beta^2/4}$, is obtained by measuring the state \eqref{Eq:qubits} with measurements:
\begin{equation}\label{Eq:Ibeta2}
\begin{split}
\mathbb{A}_0 = \cos\mu\,\sigma_z + \sin\mu\,\sigma_x, \hspace{1cm} \mathbb{B}_0 = \sigma_z, \\
\mathbb{A}_1 = \cos\mu\,\sigma_z - \sin\mu\,\sigma_x, \hspace{1cm} \mathbb{B}_1 = \sigma_x,
\end{split}
\end{equation}
where $\tan\mu = \sin(2\theta)$.
Second, when maximally violated, the inequality certifies one bit of local randomness on Bob's side for his second measurement choice $y^0=1$: $G(y^0=1,P^{\textrm{max}}_{\textrm{obs}})=1/2$ \cite{Acin2012}. These observations are possible because the maximal violation is \textit{uniquely} achieved by the probability distribution $P^{\textrm{max}}_{\textrm{obs}}$ that arises from the previously-described state and measurements \eqref{Eq:qubits} and \eqref{Eq:Ibeta2}. Therefore, for the maximal violation, $P_{\textrm{obs}}^{\textrm{max}}=P_{\textrm{ext}}$ in \eqref{sumupto} and the guessing probability for input choice $y^0=1$ is equal to $1/2$.

However, in general we may not get correlations that maximally violate our Bell inequality but give a violation that is only close to maximal. In Appendixes A , B, and C we show how to make conclusions about the guessing probability for non-maximal violations. In particular, we show that for \textit{any} Bell inequality with a unique point of maximal violation, the guessing probability is a continuous function of the value of the inequality close to the maximal violation. This implies in the particular case we are studying that:
\begin{equation}\label{Eq:cont}
	I_{\theta} \rightarrow I_{\theta}^{max} \hspace{0.5cm} \Rightarrow \hspace{0.5cm} G(y^0=1,P_{\textrm{obs}}) \rightarrow \frac{1}{2}.
\end{equation}
In Appendix F, we also provide a numerical upper bound on the guessing probability $G(y^0=1,P_{obs})$ by a concave function of the value of $I_{\theta}$.

Bell inequalities~\eqref{Eq:Ibeta} are the first main ingredient in our sequential construction below. The second one is the use of general, non-projective measurements. Indeed, if $B_1$ performs a projective measurement on the shared entangled state, the resulting post-measurement state, now shared between Alice and $B_{2}$, is separable and thus useless for randomness production. Consequently, one needs to consider non-projective measurements to retain some entanglement in the system for the subsequent measurements. For this purpose, let us introduce the following two-outcome quantum measurement (written in the formalism of Kraus operators):
\begin{equation}\label{MeasKraus}
M_{\pm1}(\xi)=\cos\xi|\pm\rangle\!\langle\pm|+\sin\xi|\mp\rangle\!\langle\mp|
\end{equation}
corresponding to the two outcomes $\{\pm1\}$. This measurement $\hat{\sigma}_x(\xi)\equiv\{M_{+1}^{\dagger}M_{+1},M_{-1}^{\dagger}M_{-1}\}$
can be understood as a generalization of the projective measurement $\sigma_x$. It varies from being projective (for $\xi = 0$) to being non-interacting (for $\xi = \pi/4$). One can verify that measuring an entangled state \eqref{Eq:qubits} for $\xi \in ]0,\pi/4]$ (non-projective measurement) the post-measurement state still retains some entanglement, irrespectively of the outcome. Therefore, by tuning the parameter $\xi$ we are able to vary the destruction of the entanglement of the state at the gain of extracting information from it (cf. Ref. \cite{Silva}): the closer to being a projective measurement, the lower the entanglement in the post-measurement state, but the bigger the violation of the initial Bell inequality.

\textit{Scheme for unbounded randomness certification.---}We now combine the previous observations to demonstrate our main result. First, let us recall that, as shown in \cite{Acin2012}, one can obtain one bit of randomness from any pure entangled two qubit state, irrespective of the amount of entanglement in it. Moreover, one can verify that approximately one random bit can be certified if the measurements are close to the ones in Eq. \eqref{Eq:Ibeta2} [in the sense that $\hat{\sigma}_x(\xi)$  is close to a measurement of $\sigma_x$ for $\mathbb{B}_1$ in Eq. \eqref{Eq:Ibeta2}] since $I_{\theta}$ is then close to $I_{\theta}^{\max}$ in Eq. \eqref{Eq:cont}. Second, the measurement in Eq. \eqref{MeasKraus} is only close to projective for $\xi$ close to zero and leaves entanglement in the post-measurement state between Alice and Bob which is thus still useful for randomness certification.
By repeated use of these two properties we can certify the production of an unbounded amount of random bits from a single pair of entangled qubits. We now formally describe this process in which Alice makes a single measurement on her share of the state, whereas Bob makes a sequence of $n$ measurements on his.

Each $B_i$ chooses between measurements of $\sigma_z$ and $\hat{\sigma}_x(\xi_i)$ for inputs $y_{i}=0$ and $y_{i}=1$, respectively, with outcomes $b_i \in \{\pm 1\}$. The parameter $\xi_i$ is fixed before the beginning of the experiment. The initial entangled state shared between Alice and Bob, before $B_1$'s measurement, is $\ket{\psi^{(1)}(\theta_1)}$ [see Eq. \eqref{Eq:qubits} with $\theta = \theta_1$]. If the first non-projective measurement of the operator $\hat{\sigma}_x(\xi_1)$ is made by $B_1$ on the initial state $\ket{\psi^{(1)}(\theta_1)}$, the post-measurement state is of the form
\begin{equation}\label{Eq:diagpostsele}
\ket{\psi^{(2)}_{b_1}(\theta_1,\xi_1)}=U_A^{b_1}(\theta_1,\xi_1) \otimes V_B^{b_1}\left(\theta_1,\xi_1)(c\ket{00}+s\ket{11}\right),
\end{equation}
where $c=\cos(\theta_{b_1}(\theta_1,\xi_1))$ and $s=\sin(\theta_{b_1}(\theta_1,\xi_1))$ and the two unitaries, $U_A^{b_1}(\theta_1,\xi_1)$ and $V_B^{b_1}(\theta_1,\xi_1)$, and angle $\theta_{b_1}(\theta_1,\xi_i) \in ]0,\pi/4]$ depend on the first outcome $b_{1}$ and the angles $\theta_1$ and $\xi_1$.

After his measurement, $B_1$ applies the unitary $(V_B^{b_1})^\dagger$, conditioned on his outcome $b_1$, on the post-measurement state going to $B_{2}$. This allows $B_2$ to use the  same two measurements $\hat{\sigma}(\xi_2)$ and $\sigma_z$ independently of the outcome $b_1$ since the unitary $(V_B^{b_1})$ is canceled in \eqref{Eq:diagpostsele}. This last procedure will be applied by each $B_i$ after his measurement, before sending the post-measurement state to the next $B_{i+1}$. If the system passed through \textit{only} the non-projective measurements, the state received by $B_i$ can be one of $2^{i-1}$ potential states, depending on all of the previous $B_j$'s ($j<i$) outcomes (one for each combination $\vec{b}_{i-1} \equiv (b_1,b_2,..,b_{i-1})$ of outcomes obtained by the previous $B_j$, these can be computed \textit{before} the beginning of the experiment). Any of these states can be written as:
\begin{equation}\label{Eq:statei}
\ket{\psi^{(i)}_{\vec{b}_{i-1}}} = U_A^{\vec{b}_{i-1}} \otimes \mathbbm{1}_B \left[\cos(\theta_{\vec{b}_{i-1}})\ket{00}+\sin(\theta_{\vec{b}_{i-1}})\ket{11}\right],
\end{equation}
where the angles $\theta_{\vec{b}_{i-1}}$ and the matrix $U_A^{\vec{b}_{i-1}}$ both depend on the outcomes $\vec{b}_{i-1}$, on the initial angle $\theta_1$ and the angles $\xi_{j}$ of the previous $B_j$'s with $j<i$. In the notation, we will always omit the dependence on the angles $\theta_1$ and $\xi_1,\xi_2,..,\xi_{j}$ since these are fixed \textit{before} the beginning of the experiment. For each of these different potential states with angle $\theta_{\vec{b}_{i-1}}$, Alice adds two measurements to her input choices, where for $k\in\{0,1\}$, these are measurements of the observables $\mathbb{A}_{k}^{\vec{b}_{i-1}}$ which are defined as
\begin{equation}\label{Eq:Alice2Meas}
U_A^{\vec{b}_{i-1}}\left[\cos(\mu_{\vec{b}_{i-1}})\sigma_z+(-1)^{k}\sin(\mu_{\vec{b}_{i-1}})\sigma_x\right](U_A^{\vec{b}_{i-1}})^{\dagger},
\end{equation}
where $\tan(\mu_{\vec{b}_{i-1}})=\sin(2\theta_{\vec{b}_{i-1}})$, depending on the specific state $\ket{\psi^{(i)}_{\vec{b}_{i-1}}}$ \eqref{Eq:statei}.

We are now ready to describe how the scheme certifies randomness.
The measurement operator $\hat{\sigma}_x(\xi_i)$ can be made arbitrarily close to $\sigma_x$ by choosing $\xi_{i}$ sufficiently small. This brings the outcome statistics for measurements $\hat{\sigma}_x(\xi_i), \sigma_z$ on Bob's side and $\mathbb{A}_0^{\vec{b}_{i-1}},\mathbb{A}_1^{\vec{b}_{i-1}}$ on Alice's side on the state in Eq. \eqref{Eq:statei}, arbitrarily close to the statistics for the measurements in Eq. \eqref{Eq:Ibeta2} and a state of the form in Eq. \eqref{Eq:qubits}, for $\theta=\theta_{\vec{b}_{i-1}}$. Therefore, the inequality $I_{\theta_{\vec{b}_{i-1}}}$ for Alice and $B_i$ as defined in \eqref{Eq:Ibeta} can be made arbitrarily close to its maximal violation. This in turn guarantees that the guessing probability, $G(y^0_i=1,P_{obs})$ can be made arbitrarily close to $1/2$.
Note that this guessing probability does not only describe the instances when Alice chooses the measurements $\mathbb{A}_j^{\vec{b}_{i-1}}$. Since Eve does not know Alice's measurement choices in advance she cannot use a strategy that gives higher predictive power for the instances when Alice chooses other measurements. Finally, by making $G(y^0_i=1,P_{obs})$ sufficiently close to $1/2$ for each $i$ (by choosing each $\xi_i$ sufficiently close to $0$) the DIGP $G(y^0_1,y^0_2,..,y_n^0,P_{obs})$ can be made arbitrarily close to $2^{-n}$ (see Appendix E for a proof).

At the end, Bob can produce $m$ random bits by a suitably chosen sequence $\hat{\sigma}_x(\xi_i)$, $i \in \{1,2,..,n\}$, of $n>m$ measurements. The certification only requires that each $B_i$ occasionally chooses the projective measurement $\sigma_z$ so that the whole statistics can be obtained. Note that Bob can choose $\sigma_z$ with probability $\gamma_i$ and $\hat{\sigma}_x(\xi_i)$ with probability $1-\gamma_i$ for $\gamma_i$ as close to zero as he wants. 
Finally, note that the value of \textit{each} inequality $I_{\theta_{\vec{b}_{i-1}}}$ between each $B_i$ and $A$ can be made as close as wanted to the maximal value $I_{\theta_{\vec{b}_{i-1}}}^{\textrm{max}}$. Therefore, we can certify randomness for each measurement $B_i$ in the sequence at the expense of increasing the number of measurements that Alice chooses from.  

This protocol can also be used to certify any finite amount of randomness with some small but strictly non-zero noise robustness. Indeed, assume the goal is to certify $m$ random bits. One can then run the protocol for $m'>m$ bits. By continuity, when adding a small but finite amount of noise the protocol will certify $m$ random bits. 

\textit{Conclusion.---} We have presented a scheme for certifying an unbounded amount of random bits from a single pair of entangled qubits in the scenario where one of the qubits is subjected to a sequence of measurements. 
Our work is in many respects a proof-of-principle result: First, it requires an exponentially increasing number of measurements on Alice's side, namely $\sum_{i=1}^{n}2^i=2(2^n-1)$ measurement choices for $n$ measurements in the sequence. Second, the result is based on a continuity argument and there is no control on the noise robustness. All these issues deserve further investigation. Finally, it is worth exploring how to design device-independent randomness generation protocols involving sequences of measurements. However, the sequential scenario is much more demanding from an implementation point of view, because it requires quantum non-demolition measurements. It is then unclear whether with present or near future technology sequential protocols will provide a significant practical advantage  over simpler protocols based on standard Bell tests. However, the first experimental works observing non-local correlations in the sequential scenario have recently been reported~\cite{Exp1,Exp2}. In any case, the main implications of our work are fundamental: It shows that a single pair of pure entangled qubits is a potentially unbounded source of certifiable random bits when performing sequences of measurements on it.


\begin{acknowledgments}
This work is supported by the ERC CoG QITBOX and AdG OSYRIS, the AXA Chair in Quantum Information Science, Spanish MINECO (FOQUS FIS2013-46768-P and SEV-2015-0522), Fundaci\'on Cellex, Generalitat de Catalunya (SGR 875), and The John Templeton Foundation. M.J. acknowledges support from the Marie Curie COFUND action through the ICFOnest program. R. A. acknowledges funding from the European Union's Horizon 2020 research and innovation programme under the Marie Sk\l{}odowska-Curie grant agreement No 705109. M.J.H. acknowledges support from the EPSRC (through the NQIT Quantum Hub) and the FQXi Large Grant Thermodynamic vs information theoretic entropies in probabilistic theories. P.W. acknowledges computational resources granted by the High Performance Computing Center North (SNIC 2015/1-162 and SNIC 2016/1-320).
\end{acknowledgments}


\nocite{bancal2014more,wittek2015ncpol2sdpa,yamashita2003implementation,rock,buci}

\clearpage

\onecolumngrid
\appendix

\section{The guessing probability}\label{APP1:PGgen}

We start our appendices with the following discussion, which is a summary of the work done in deriving the device-independent guessing probability (DIGP) \cite{Randomness, Acin2012,Nieto,DeLaTorre}. A conditional probability distribution that is the outcome distribution for some measurement on a quantum state is called a quantum distribution. For example, a distribution $P$ with elements $p(ab|xy)$ is quantum if there exist at least one quantum state, i.e., a positive semi-definite hermitian unit trace matrix $\rho$ and at least one set of measurements, i.e., a set of positive semi-definite hermitian matrices $M_{a|x}$, $M_{b|y}$ satisfying $\sum_a M_{a|x}=\sum_b M_{b|y}=1$ such that $p(ab|xy)=Tr(M_{a|x}\otimes{M_{b|y}}\cdot \rho)$. We will often abuse notation and refer to a distribution by its elements $p(ab|xy)$ when there is no confusion in doing so.

The set $\mathcal{\mathcal{\mathcal{Q}}}$ of quantum distributions is convex and a distribution in $\mathcal{\mathcal{\mathcal{Q}}}$ that cannot be decomposed as a convex combination of other distributions is called {\it extremal} in $\mathcal{\mathcal{\mathcal{Q}}}$. For a non-extremal distribution $P(ab|xy)$ there is in general more than one possible convex decomposition.

A non-extremal distribution $p(ab|xy)$ with a convex decomposition $p(ab|xy)=\sum_{\lambda}q_{\lambda}p_{\lambda}(ab|xy)$ can be constructed by sampling the different
distributions $p_{\lambda}(ab|xy)$ with probability $q_{\lambda}$. In this case knowledge about the convex decomposition chosen changes the ability of an eavesdropper to correctly guess the outcomes $a$ and/or $b$.

Without knowledge of the decomposition, or for extremal distributions, the probability of correctly guessing the outcome of measurement $y^0$ is $\max_{b}p(b|y^0)$, the probability of the most likely outcome.
With knowledge of the decomposition $p(ab|xy)=\sum_{\lambda}q_{\lambda}p_{\lambda}(ab|xy)$, the probability is larger or equal to $\max_{b}p(b|y^0)$
\begin{eqnarray}\label{noos}
\sum_{\lambda}q_{\lambda}\max_{b}p_{\lambda}(b|y^0)\geq \max_{b}\sum_{\lambda}q_{\lambda}p_{\lambda}(b|y^0)=\max_{b}p(b|y^0).\end{eqnarray}
For a given observed non-extremal distribution $P_{\textrm{obs}}$, it is possible that it was produced by an agent Eve that has larger predictive power than an agent which only observes the outcomes.

We now want to consider the optimal probability for the agent Eve to correctly guess an outcome $b$ of measurement $y^0$ given a distribution $p_{obs}(ab|xy)$ and control over its decomposition in extremal points. If the set of quantum distributions is closed there exist one or several optimal ways to decompose the given distribution that maximizes this probability. If the set is not closed but open or semi-open, there may not exist a maximum and the relevant quantity is instead the supremum value of Eves probability to correctly guess the outcome. Since $\max_{b}p(b|y^0)$ is a continuous function on the set of probability distributions it follows that the supremum value of $\sum_{\lambda}q_{\lambda}\max_{b}p_{\lambda}(b|y^0)$ as a function of all possible decompositions, indexed by $\lambda$, on an open or semi-open set of distributions is the same as the maximum value on the closure of the set. Therefore, in this case we can consider the closure of the set and express the 
 probability as an optimization over the extremal points of this closed set.

With this disclaimer, the maximal probability for the agent Eve to correctly guess an outcome $b$ of measurement $y^0$ given a distribution $p_{obs}(ab|xy)$ and control over the decomposition is the DIGP $G(y^0,P_{\textrm{obs}})$
\begin{equation}\label{gutt}
G(y^0,P_{\textrm{obs}})=\underset{q_{\lambda},p_{\lambda}(ab|xy)}{\textrm{max}}\sum_{\lambda}q_{\lambda}\max_{b}p_{\lambda}(b|y^0).
\end{equation}
where $\lambda$ is labelling the convex decompositions of $p_{\textrm{obs}}(ab|xy)$ in terms of extremal distributions $p_{\lambda}(ab|xy)$.
Note that if $\mathcal{\mathcal{\mathcal{Q}}}$ is not closed a given extremal point may not belong to the set but only to its closure.
For any open interval of $\mathcal{\mathcal{\mathcal{Q}}}$ the function $G(y^0,P_{\textrm{obs}})$ is a concave function \cite{Randomness}.
Therefore this kind of maximization is called a {\it concave roof} construction.

The guessing probability can be approximated by a hierarchy of semidefinite programming (SDP) relaxations~\cite{Nieto,bancal2014more}. We used Ncpol2sdpa~\cite{wittek2015ncpol2sdpa} to generate the relaxations for verifying some of the analytical results. We relied on the arbitrary-precision variant of the SDPA family of solvers~\cite{yamashita2003implementation} for obtaining important numerical values, and the solver Mosek\footnote{\url{http://mosek.com/}} in all other cases.

\section{Continuity of the guessing probability in interior and extremal points of $\mathcal{Q}$}
\label{APP2:ContinuityProof}

The guessing probability as a function on the space of probability distributions is not everywhere continuous. An example of this is that the family of Bell-inequalities of Ref. \cite{Acin2012} that certifies one bit of randomness for measurements on a state with arbitrarily little entanglement. The probability distribution corresponding to such a state and the measurements in Eq. \ref{Eq:Ibeta2} has $G(y^0,P_{obs})=1/2$ and is at the same time arbitrarily close to a distribution corresponding to measurements on a product state with $G(y^0,P_{obs})=1$, i.e., a distribution which can be prepared by a local deterministic procedure. There is thus a discontinuity where the guessing probability jumps from $1/2$ to $1$.  The key to understanding this discontinuity is that the local deterministic distribution is not extremal while the quantum distribution in the neighbouring point is extremal. As seen in Eq. \ref{noos}, the guessing probability is given by different functions depend
 ing on whether a distribution can be decomposed into other distributions or not, i.e., if it is extremal or not. This means discontinuities can appear at the boundary between extremal points and non-extremal points.

We will now show that discontinuities can {\it only} appear at such boundaries between extremal and non-extremal points in the boundary $\partial{\mathcal{Q}}$ of the quantum set $\mathcal{Q}$. To do this we use the property of the guessing probability described in Eq. \ref{noos}, together with some general properties of concave functions and in particular concave roof constructions.

We want to show that the following propositions are true:

\begin{proposition}\label{prop1}
The function $G(y^0,P_{\textrm{obs}})$ on the set of quantum distributions $\mathcal{Q}$ is continuous in the interior of $\mathcal{Q}$.
\end{proposition}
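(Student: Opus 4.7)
The plan is to exploit the fact, already noted in the text, that $G(y^0,\cdot)$ is a concave function on $\mathcal{Q}$ obtained as a concave roof construction of the continuous bounded function $f(P)=\max_b p(b|y^0)$, and then invoke the classical result from convex analysis that any concave function on a convex subset of a finite-dimensional real vector space is automatically continuous on the relative interior of its domain.

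The steps, in the order I would carry them out, are as follows. First, I would re-derive concavity of $G$ directly from its definition in \eqref{gutt}, so that the rest of the argument needs nothing more than what has already been set up. Given two distributions $P_1,P_2\in\mathcal{Q}$, $\lambda\in[0,1]$, and $\varepsilon>0$, pick convex decompositions of $P_1$ and $P_2$ into extremal points whose value is within $\varepsilon$ of $G(y^0,P_1)$ and $G(y^0,P_2)$, respectively. Then the $\lambda,(1-\lambda)$ weighted union of these two decompositions is a valid (but not necessarily optimal) decomposition of $\lambda P_1+(1-\lambda)P_2$, and evaluating the objective on it gives
\begin{equation}
G(y^0,\lambda P_1+(1-\lambda)P_2)\geq \lambda G(y^0,P_1)+(1-\lambda)G(y^0,P_2)-\varepsilon.
\end{equation}
Letting $\varepsilon\to 0$ yields concavity; this is the content of the concavity statement attributed to Ref.~\cite{Randomness} above.

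Second, I would invoke the standard theorem (e.g.\ \cite{rock}) that a concave function on a convex set $C\subset\mathbb{R}^N$ is continuous on the relative interior of $C$. The set $\mathcal{Q}$ of quantum behaviours is a convex subset of the finite-dimensional real vector space of conditional distributions $p(ab|xy)$, so this theorem applies directly and gives continuity of $G(y^0,\cdot)$ on the interior of $\mathcal{Q}$, which is exactly Proposition~\ref{prop1}.

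The only real subtlety, and what I would flag as the main obstacle, is a careful reading of ``interior.'' Strictly speaking, $\mathcal{Q}$ is not full-dimensional in the ambient space of conditional distributions because of normalization and no-signalling constraints, so the theorem of~\cite{rock} delivers continuity on the \emph{relative} interior of $\mathcal{Q}$ (its interior inside its affine hull). Identifying ``interior of $\mathcal{Q}$'' with this relative interior is what makes the statement correct; the truly pathological behaviour illustrated by the motivating example (where $G$ jumps from $1/2$ to $1$ as one moves from an extremal point to a nearby non-extremal product distribution) necessarily occurs on $\partial\mathcal{Q}$ and is therefore outside the scope of this proposition, presumably treated by the next proposition.
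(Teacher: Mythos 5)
Your proof is correct and follows essentially the same route as the paper, which also derives the result directly from concavity of $G(y^0,\cdot)$ together with the standard fact that a concave function is continuous on an open subset of its domain. Your extra care in distinguishing the interior from the relative interior of $\mathcal{Q}$ is a worthwhile refinement that the paper glosses over, but it does not change the substance of the argument.
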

\begin{proposition}\label{prop2}
The function $G(y^0,P_{\textrm{obs}})$ is continuous in any extremal point of $\mathcal{Q}$.
\end{proposition}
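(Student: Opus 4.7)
The plan is to establish continuity at an extremal point $P_{\textrm{ext}}\in\mathcal{Q}$ by proving lower and upper semi-continuity separately, exploiting the concave-roof structure of $G$ together with the compactness of the closure $\overline{\mathcal{Q}}$ inside which Appendix~\ref{APP1:PGgen} implicitly places the optimization. The essential point is that at an extremal point the only admissible decomposition is the trivial one, so $G$ collapses there to the continuous function $\max_b p(b|y^0)$, and one must argue that nearby (possibly non-extremal) points cannot force $G$ much higher.

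For lower semi-continuity I would observe that for every $P\in\mathcal{Q}$ and every convex decomposition $P=\sum_\lambda q_\lambda P_\lambda$ into extremal distributions, Eq.~(\ref{noos}) gives $\sum_\lambda q_\lambda \max_b p_\lambda(b|y^0)\geq \max_b p(b|y^0)$, whence $G(y^0,P)\geq \max_b p(b|y^0)$. At $P_{\textrm{ext}}$ the only decomposition is trivial, so the inequality saturates: $G(y^0,P_{\textrm{ext}})=\max_b p_{\textrm{ext}}(b|y^0)$. Since $P\mapsto \max_b p(b|y^0)$ is continuous, $\liminf_{P\to P_{\textrm{ext}}} G(y^0,P)\geq G(y^0,P_{\textrm{ext}})$ follows at once.

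For upper semi-continuity, take $P_n\to P_{\textrm{ext}}$. Because the conditional distributions live in a finite-dimensional simplex, Carath\'eodory's theorem bounds the number of terms in an optimal decomposition by some $N$ independent of $n$, so I can write $P_n=\sum_{i=1}^{N} q_i^{(n)} P_i^{(n)}$ with $G(y^0,P_n)=\sum_i q_i^{(n)}\max_b p_i^{(n)}(b|y^0)$ and $P_i^{(n)}\in\overline{\mathcal{Q}}$ extremal. Since the weights lie in $[0,1]$ and the $P_i^{(n)}$ lie in the compact set $\overline{\mathcal{Q}}$, I pass to a subsequence along which $q_i^{(n)}\to q_i$ and $P_i^{(n)}\to P_i\in\overline{\mathcal{Q}}$ for every $i$. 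Taking the limit in $P_n=\sum_i q_i^{(n)} P_i^{(n)}$ yields $P_{\textrm{ext}}=\sum_i q_i P_i$, and extremality of $P_{\textrm{ext}}$ forces $P_i=P_{\textrm{ext}}$ on the support $\{i:q_i>0\}$. Continuity of $\max_b p(b|y^0)$ then permits passage to the limit in the objective, giving $\limsup_n G(y^0,P_n)\leq \sum_i q_i\max_b p_{\textrm{ext}}(b|y^0)=G(y^0,P_{\textrm{ext}})$.

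The main obstacle is the upper semi-continuity step, which hinges on two technical points. First, I must make sure the supremum defining $G$ is actually attained on compact sets of decompositions; this is exactly where working in the closure $\overline{\mathcal{Q}}$ and invoking Carath\'eodory to avoid unbounded-cardinality mixtures are essential. Second, I must know that $P_{\textrm{ext}}$, extremal in $\mathcal{Q}$, is still extremal in $\overline{\mathcal{Q}}$, since it is precisely the failure of this lifting that produces the boundary discontinuities between extremal and non-extremal regions highlighted earlier in the appendix. These are exactly the points that Proposition~\ref{prop2} excludes, so within its scope the collapse argument above goes through and the two semi-continuity estimates combine to give continuity of $G$ at $P_{\textrm{ext}}$.
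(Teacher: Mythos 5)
Your proof is essentially correct, but it takes a genuinely different route from the paper's. The paper never argues pointwise at the extremal point: instead it recursively decomposes the boundary $\partial\mathcal{Q}$ into open sets of extremal points and closed convex hulls of non-extremal regions, invokes the Bucicovschi--Lebl result (Theorem \ref{theo1}) that the concave roof extension of an upper semi-continuous concave function on a compact set is upper semi-continuous, and combines it with Lemma \ref{theo2} and concavity to conclude that any discontinuity must sit at non-extremal boundary points --- continuity at extremal points then falls out by elimination. You instead prove the two semi-continuity inequalities directly at $P_{\textrm{ext}}$: lower semi-continuity from Eq.~(\ref{noos}) together with the collapse $G(y^0,P_{\textrm{ext}})=\max_b p_{\textrm{ext}}(b|y^0)$ and continuity of $\max_b p(b|y^0)$; upper semi-continuity by a Carath\'eodory-plus-compactness subsequence extraction, where extremality of $P_{\textrm{ext}}$ forces every limiting component with positive weight to equal $P_{\textrm{ext}}$ itself. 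Your argument is more elementary and self-contained (no external convex-analysis theorem, no recursive stratification of the boundary), and it localizes cleanly at the point of interest; the paper's machinery, by contrast, yields the stronger global statement that $G$ is upper semi-continuous on all of $\mathcal{Q}$ and pins down exactly where discontinuities can live, which is what Appendix C then exploits.

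Two technical points deserve care but do not break the proof. First, attainment of the optimal decomposition of $P_n$ need not hold if the extreme points of $\overline{\mathcal{Q}}$ do not form a closed set; replacing optimal decompositions by $\tfrac{1}{n}$-optimal ones leaves your $\limsup$ argument intact. Second, your final remark that points extremal in $\mathcal{Q}$ but not in $\overline{\mathcal{Q}}$ are ``excluded by Proposition \ref{prop2}'' is not literally what the proposition says; the correct justification is that Appendix \ref{APP1:PGgen} explicitly passes to the closure and defines $G$ via decompositions into extremal points of the closed set, so ``extremal'' throughout means extremal in $\overline{\mathcal{Q}}$ --- the same implicit convention the paper's own proof relies on.
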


Proposition \ref{prop1} is trivial.
The guessing probability $G(y^0,P_{\textrm{obs}})$ is concave by definition and any concave function is continuous on an open subset of its domain \cite{Rock}. In particular this means that $G(y^0,P_{\textrm{obs}})$ is continuous in the interior of $\mathcal{Q}$. Note that if $\mathcal{Q}$ is open, i.e. has no boundary, there can thus not exist any discontinuity.

To address proposition \ref{prop2} we consider the restriction $G(y^0,P_{\textrm{obs}})^{\partial{\mathcal{Q}}}$ of $G(y^0,P_{\textrm{obs}})$ to the boundary $\partial{\mathcal{Q}}$ of the quantum set. First we note that the function $G(y^0,P_{\textrm{obs}})^{\partial{\mathcal{{Q}}}}$ by definition is continuous on any open set of extremal points since $\max_{b}p(b|y)$ is a continuous function.
Next we observe that the boundary $\partial{\mathcal{{Q}}}$ can be decomposed into a collection of open sets of extremal points and a collection $\{S_i\}$ of closed connected possibly overlapping sets where each set is the closure of a maximal open connected subset. A maximal open connected subset $M$ of the non-extremal points is an open set such that any other open connected set of non-extremal points which contains $M$ is $M$ itself. Therefore, each set $S_i$ is the convex hull of the set of extremal points in its closure.

Any closed set $S_i$ has a boundary $\partial S_i$ with the rest of $\partial{\mathcal{{Q}}}$ which can be decomposed in the same way into open sets of extremal points and closed connected sets $S_{ij}$ that are closures of maximal open connected sets of non-extremal points.
The boundary  $\partial S_{ij}$ of $S_{ij}$ with the rest of $\partial S_i$ is in turn decomposable in the same way.

Continuing this successive decomposition of the boundary $\partial \mathcal{Q}$ we will eventually reach sets $S_{ijk\dots}$ that are one dimensional simplexes, or alternatively sets with only extremal points in the boundary. On sets of these two types $G(y^0,P_{\textrm{obs}})$ is a continuous function. To see this we introduce the following terminology, and use a theorem from Ref. \cite{buci}.

A function for which all discontinuities are such that the function takes the higher value at a closed set and the lower value at an open set is called {\it upper semi-continuous}.

The function $G(y^0,P_{\textrm{obs}})^{S}$ defined on a closed convex set $S$ can be viewed as an extension of $G(y^0,P_{\textrm{obs}})^{\partial{S}}$ to the interior of $S$. This extension is called the {\it concave roof extension}.

\begin{theorem}\label{theo1}{Let C be a compact set and $K=co(C)$ be the convex hull of C. If $F:C\to{\mathbb{R}}$ is bounded, upper semi-continuous, and concave on C, then the concave roof extension $\hat{F}:K\to\mathbb{R}$ of F to K is upper semi-continuous \cite{buci}.}
\end{theorem}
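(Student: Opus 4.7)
The plan is to verify upper semi-continuity of $\hat{F}$ directly from the definition: for every sequence $y_k\to y$ in $K$ I will show that $\limsup_{k}\hat{F}(y_k)\leq \hat{F}(y)$. The main tool is Carath\'eodory's theorem, which allows me to realise each $\hat{F}(y_k)$, or at least approach it up to $1/k$, by a convex combination of a uniformly bounded number $N=\dim(\mathrm{aff}(K))+1$ of points in $C$. This reduction to a uniformly bounded number of summands is crucial; without it, a limiting decomposition of $y$ need not exist.

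First I would pick, for each $k$, an approximately optimal decomposition $y_k=\sum_{i=1}^{N}\lambda_i^{k}x_i^{k}$ with $x_i^{k}\in C$, $\lambda_i^{k}\geq 0$, $\sum_i\lambda_i^{k}=1$, satisfying $\sum_i\lambda_i^{k}F(x_i^{k})\geq \hat{F}(y_k)-1/k$. Then I would pass to a subsequence, still indexed by $k$, on which $\hat{F}(y_k)\to\limsup_{k}\hat{F}(y_k)$. Using compactness of the unit simplex and of $C$, I would refine once more so that $\lambda_i^{k}\to\lambda_i$ and $x_i^{k}\to x_i\in C$ for every $i$. Boundedness of $F$ lets me further refine so that the numerical limit $F(x_i^{k})\to\alpha_i$ exists for each $i$; by upper semi-continuity of $F$ on $C$ this gives $\alpha_i\leq F(x_i)$.

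Along this subsequence the limit of $y_k$ equals $y=\sum_i\lambda_i x_i$, so $(\lambda_i,x_i)$ is a valid decomposition of $y$. Combining everything,
\begin{equation*}
\limsup_{k}\hat{F}(y_k)=\lim_{k}\sum_{i=1}^{N}\lambda_i^{k}F(x_i^{k})=\sum_{i=1}^{N}\lambda_i\alpha_i\leq\sum_{i=1}^{N}\lambda_i F(x_i)\leq \hat{F}(y),
\end{equation*}
which is exactly the upper semi-continuity statement. The boundedness assumption on $F$ is what legitimises passing the limit inside the finite sum, and concavity of $F$ on $C$ ensures the concave roof extension agrees with $F$ on $C$ itself, so no pathology arises at the boundary.

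The step I expect to be most delicate is the last chain of inequalities, specifically justifying that the limit of $\sum_i\lambda_i^{k}F(x_i^{k})$ can be computed termwise. The danger is that $F$ is only upper semi-continuous, so $F(x_i^{k})$ need not converge to $F(x_i)$; this is precisely why I extract one more subsequence on which each numerical sequence $F(x_i^{k})$ converges to some $\alpha_i\leq F(x_i)$, using boundedness of $F$ together with Bolzano--Weierstrass. A secondary subtlety is that Carath\'eodory's bound requires a finite-dimensional ambient space; in the probability-distribution setting of interest here the relevant $C\subset\partial\mathcal{Q}$ lies in a finite-dimensional simplex of conditional distributions, so this hypothesis is automatic and needs only to be flagged.
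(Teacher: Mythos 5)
The paper does not actually prove Theorem~\ref{theo1}: it is imported verbatim from the cited reference \cite{buci} and used as a black box, so there is no internal proof to compare yours against. Judged on its own merits, your argument is sound and is essentially the standard compactness proof of this fact: approximate each $\hat{F}(y_k)$ by a decomposition with a uniformly bounded number of terms, extract convergent subsequences of the weights, the points, and the values $F(x_i^k)$, and use upper semi-continuity of $F$ termwise to pass to a limiting decomposition of $y$. The one correction you should make is the Carath\'eodory count: $N=\dim(\mathrm{aff}(K))+1$ is the number of points needed to represent the point $y_k$ alone, but to represent $y_k$ \emph{while preserving the value} $\sum_i\lambda_i^k F(x_i^k)$ you must apply Carath\'eodory to the lifted set $\{(x,F(x)):x\in C\}\subset\mathbb{R}^{d+1}$, which gives $N=\dim(\mathrm{aff}(K))+2$. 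This is an off-by-one slip, not a structural gap; any uniform bound on the number of summands makes the subsequence extraction work. Two further cosmetic points: the first relation in your displayed chain should be $\limsup_k\hat{F}(y_k)\leq\lim_k\bigl(\sum_i\lambda_i^k F(x_i^k)+1/k\bigr)$ rather than an equality, and note that concavity of $F$ on $C$ plays no role in the upper semi-continuity argument itself (it only guarantees that $\hat{F}$ restricted to $C$ agrees with $F$, i.e.\ that $\hat{F}$ is genuinely an extension), which is worth stating explicitly so the reader does not hunt for where that hypothesis is used.
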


The guessing probability is bounded and concave by definition.
If the boundary of $S$ has only extremal points it follows that $G(y^0,P_{\textrm{obs}})^{\partial{S}}$ is continuous in $\partial{S}$ and by theorem \ref{theo1} $G(y^0,P_{\textrm{obs}})^{S}$ is upper semi-continuous on $S$. Moreover, since $G(y^0,P_{\textrm{obs}})^{S}$ is concave it cannot have an upper semi-continuous discontinuity between the boundary and the interior.
If $S$ is a one-dimensional simplex we can, if necessary, restrict the domain of the guessing probability to a one dimensional subspace and make the same argument.

Next we consider discontinuities between $S$ and an open set of extremal points.

\begin{lemma}\label{theo2}Any discontinuity of $G(y^0,P_{\textrm{obs}})$ between a closed set and an open set of extremal points is upper semi-continuous.
\end{lemma}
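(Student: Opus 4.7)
My plan is to exploit the dichotomy between the two expressions the guessing probability takes on extremal versus non-extremal points. By the discussion leading to Eq.~(\ref{noos}), on any extremal point $P\in\mathcal{Q}$ one has $G(y^0,P)=\max_b p(b|y^0)$, while on any non-extremal point $G(y^0,P)\geq \max_b p(b|y^0)$ because every admissible convex decomposition $P=\sum_{\lambda}q_\lambda P_\lambda$ satisfies $\sum_\lambda q_\lambda\max_b p_\lambda(b|y^0)\geq\max_b p(b|y^0)$, and $G$ is the supremum of this quantity over all such decompositions. The function $f(P)\equiv\max_b p(b|y^0)$ is the pointwise maximum of finitely many continuous linear functionals on the probability simplex, hence is continuous everywhere on $\mathcal{Q}$.

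Now let $U$ be an open set of extremal points and $S$ a closed set appearing in the decomposition of $\partial\mathcal{Q}$, and let $P^*$ be a point on their common boundary. Since $U$ is open and consists solely of extremal points, $G$ restricted to $U$ coincides with $f|_U$, which is continuous, so
\begin{equation}
\lim_{P\to P^*,\,P\in U} G(y^0,P) \;=\; f(P^*).
\end{equation}
Because $U$ and $S$ share a boundary and $S$ is closed, $P^*\in S$. Two cases arise. If $P^*$ happens to be extremal, then $G(y^0,P^*)=f(P^*)$ and no discontinuity occurs at $P^*$. If $P^*$ is non-extremal, the concave-roof definition of $G$ on $S$ gives $G(y^0,P^*)\geq f(P^*)$. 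In either case, the limit from the open side $U$ equals $f(P^*)$, which is at most the value assumed at $P^*$ on the closed side $S$. Hence any actual discontinuity at such a boundary point assigns the larger value to the closed set $S$ and the smaller value to the open set $U$, which is precisely the definition of an upper semi-continuous discontinuity.

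I expect the only delicate point to be convincing the reader that the limit from the $U$-side is genuinely given by $f(P^*)$ rather than by some contribution from non-extremal points; this is settled by noting that $U$ is by construction an open set consisting entirely of extremal points, so every sequence approaching $P^*$ from within $U$ lies in the domain where $G=f$, and continuity of $f$ on the whole of $\mathcal{Q}$ does the rest. The global bound $G\geq f$ on non-extremal points, inherited from~(\ref{noos}), is what forces the jump (if any) to be upwards as one crosses from $U$ into $S$, giving the required upper semi-continuity.
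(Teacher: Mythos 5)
Your proof is correct and follows essentially the same route as the paper's: you split on whether the boundary point is extremal, use the continuity of $\max_b p(b|y^0)$ to control the limit from the open set of extremal points, and invoke the inequality $G(y^0,P)\geq\max_b p(b|y^0)$ from Eq.~(\ref{noos}) at non-extremal points to conclude that any jump is upward into the closed set. You merely spell out in more detail the limit argument that the paper leaves implicit.
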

\begin{proof}
If the boundary point of the closed set is extremal the $G(y^0,P_{\textrm{obs}})$ is continuous since $\max_{b}p(b|y^0)$ is continuous. Next consider a non-extremal boundary point of the closed set. $G(y^0,P_{\textrm{obs}})$ in the non-extremal point is always greater or equal to $\max_{b}P(b|y^0)$ by Eq. \ref{noos}. Thus any discontinuity is upper semi-continuous.
\end{proof}
If there is a discontinuity of $G(y^0,P_{\textrm{obs}})$ on the boundary of $S$ it is, by lemma \ref{theo2} , upper semi-continuous and at a set of non-extremal points.

By repeated application of Theorem \ref{theo1} and lemma \ref{theo2} we can conclude that $G(y^0,P_{\textrm{obs}})^{\partial{\mathcal{{Q}}}}$ is upper semi-continuous on $\partial{\mathcal{{Q}}}$ and that $G(y^0,P_{\textrm{obs}})$ is upper semi-continuous on ${\mathcal{{Q}}}$. Since $G(y^0,P_{\textrm{obs}})$ is concave there cannot be an upper semi-continuous discontinuity between the boundary $\partial{\mathcal{{Q}}}$ and the interior of $\mathcal{{{Q}}}$. Thus the only discontinuities are between non-extremal points in closed subsets of $\partial{\mathcal{{Q}}}$ and extremal points in open subsets of $\partial{\mathcal{{Q}}}$.

\section{Bounds on the guessing probability as a function of a Bell inequality: Continuity at a unique point of maximal violation}
\label{APP6:ContAtPextUnique}

We have described the guessing probability as a function on set of quantum distributions, but it is sometimes useful to consider it as a function of the violation of some given Bell inequality $I$.
A Bell expression is a linear function on the space of distributions and the set of distributions for which it takes a given value $t$ is a hyper-plane $H_{t}$. The different values of the Bell expression thus defines a family of parallel hyperplanes.

On each hyperplane $H_{t}$ we can consider the restriction $G(y^0,P_{\textrm{obs}})_t$ of $G(y^0,P_{\textrm{obs}})$ to the intersection of $H_{t}$ with $\mathcal{{{Q}}}$ and take its maximum $\max G(y^0,P_{\textrm{obs}})_t$ on this intersection. This maximum is the highest probability for Eve to guess the outcome of $y^0$ for any distribution $P\in{\mathcal{{Q}}}$ such that $I(P)=t$. The function $\max G(y^0,P_{\textrm{obs}})_t$ can have a discontinuity at $t=t_c$ only if $H_{t_c}$ intersects with a point in $\mathcal{{{Q}}}$ at which ${G}(y^0,P_{\textrm{obs}})$ is discontinuous.

Let us consider a Bell expression $I$ and its maximal value $t_{max}$ on $\mathcal{\mathcal{\mathcal{Q}}}$.
If the intersection of $H_{t_{max}}$ and $\mathcal{\mathcal{\mathcal{Q}}}$ is a single extremal point
it follows from Propositions \ref{prop1} and \ref{prop2} that there is a $t_c\neq t_{max}$ such that for the range  $t_{c}\leq t\leq t_{max}$ for which $\max G(y^0,P_{\textrm{obs}})_t$ is a continuous function of $t$.

If the intersection of $H_{t_{max}}$ and $\mathcal{\mathcal{\mathcal{Q}}}$ contains more than one extremal point it also contains a set of non-extremal points of $\partial \mathcal{\mathcal{\mathcal{Q}}}$ and $G(y^0,P_{\textrm{obs}})$ could have a discontinuity between this set and an open set of extremal points. This discontinuity could lead to a discontinuity of the function $\max G(y^0,P_{\textrm{obs}})_t$ at $t_{max}$.

\section{Guessing probability for a sequence}\label{APP3:PGforSequ}

So far, we have discussed the continuity properties of the guessing probability in the standard scenario, where one single measurement $M_{a|x}$ is made on Alice's side and $M_{b|y}$ on Bob's. The goal of this section is to extend these properties to the case where sequential measurements $M_{a_i|x_i}$ and $M_{b_i|y_i}$ are performed by each party, where $i$ labels the position of a particular measurement in the sequence.

Let us consider a sequence of measurements $\hat{\sigma}(\xi_i)$ chosen by Bob and denote $(\xi_1,\xi_{2},\dots{,\xi_{n}})\equiv{\vec{\xi}}$. The convex decomposition of the observed outcome distribution that gives Eve optimal probability to correctly guess the sequence of outcomes $\vec{b}_n$ of the measurements $(y^0_1,y^0_{2},\dots{,y^0_{n}})\equiv{\vec{y}^0_n}$ is a function of $\vec{\xi}$. The guessing probability $G(\vec{y}^0_n,P_{\textrm{obs}})$ is thus given by
\begin{equation}\label{gutt2}G(\vec{y}^0_n,P_{\textrm{obs}})=\sum_{\lambda_{\bar{\xi}}}q_{\lambda_{\vec{\xi}}}\max_{\vec{b}_{n}}p_{\lambda_{\vec{\xi}}}(b_1|y^0_1)\cdot p_{\lambda_{\vec{\xi}}}(b_2|y^0_2,y^0_1,b_1)\dots p_{\lambda_{\vec{\xi}}}(b_n|\vec{y}^0_n\vec{b}_{n-1}).
\end{equation}
where the extremal distributions $p_{\lambda_{\vec{\xi}}}(b_n|y_n\dots)$ and weights $q_{\lambda_{\vec{\xi}}}$ of the optimal convex decomposition are functions of $\vec{\xi}$ as indicated by the index $\lambda_{\vec{\xi}}$.
Let us assume that a term which appears in the convex combination is
\begin{equation}\label{kutt}
q_{\lambda_{\vec{\xi}}}p_{\lambda_{\vec{\xi}}}(b_1|y^0_1)\dots p_{\lambda_{\vec{\xi}}}(b_n|{\vec{y}^0_n}\vec{b}_{n-1}).
\end{equation}
Thus we assume that it corresponds to the most probable sequence of outcomes $\vec{b}_{n}$ for a specific distribution indexed by $\lambda_{\vec{\xi}}$.

Given that Eve has chosen the optimal convex decomposition for guessing the outcomes of $\vec{y}^0_n$ we consider her probability of correctly guessing the outcome of $y^0_m$ for $1\leq m\leq{n}$ given a particular sequence of previous outcomes $\vec{b}_{m-1}$. It is given by
\begin{equation}\label{klut}
\sum_{\lambda_{\vec{\xi}}}k_{\lambda_{\vec{\xi}}}\max_{b_m}p_{\lambda_{\vec{\xi}}}(b_m|{\vec{y}^0_m}\vec{b}_{m-1}),
\end{equation}
where $k_{\lambda_{\vec{\xi}}}$ is the probability that the distribution indexed by $\lambda_{\vec{\xi}}$ will be sampled given the sequence of previous outcomes $\vec{b}_{m-1}$
\begin{equation}\label{nut}
k_{\lambda_{\vec{\xi}}}=\frac{q_{\lambda_{\vec{\xi}}}p_{\lambda_{\vec{\xi}}}(b_1|y^0_1)\dots p_{\lambda_{\vec{\xi}}}(b_{m-1}|{\vec{y}^0_{m-1}}\vec{b}_{m-2})}{\sum_{\lambda_{\vec{\xi}}}q_{\lambda_{\vec{\xi}}}p_{\lambda_{\vec{\xi}}}(b_1|y^0_1)\!\dots\! p_{\lambda_{\vec{\xi}}}(b_{m-1}|{\vec{y}^0_{m-1}}\vec{b}_{m-2})}.
\end{equation}

The probability in Eq. \ref{klut} is larger or equal to 1/2 but is lower or equal to $G(y^0_m,P_{\textrm{obs}})$, the
maximal probability that Eve could guess the outcome of $y^0_m$ correctly given that she had chosen an optimal strategy for this and not the optimal strategy for guessing the outcomes of the sequence $\vec{y}^0_n$. Thus if $G(y^0_m,P_{\textrm{obs}})$ is close to $1/2$ so is the expression in Eq. \ref{klut}.

\section{Arbitrarliy close to $n$ random bits for $n$ measurements}\label{APP4:ArbitrarySequ}

We want to prove that $G({\vec{y}^0_n},P_{\textrm{obs}})$ can be made arbitrarily close to $2^{-n}$ by making $G(y^0_m,P_{\textrm{obs}})$ sufficiently close to 1/2 for each $1\leq m\leq n$.

The proof relies on the fact that if a convex combination of a collection of numbers $x_i$ equals $a$, i.e., $\sum_ik_ix_i=a$ where $\sum k_i=1$, and if $x_i\geq{a}$ for each $i$, it follows that for every $i$ either $k_i=0$ or $x_i=a$.

From this follows that when $G(y^0_m,P_{\textrm{obs}})$ is very close to 1/2 either $\max_{b_m}p_{\lambda_{\vec{\xi}}}(b_m|{\vec{y}^0_m}\vec{b}_{m-1})$ in Eq. \ref{klut} is very close to 1/2 or $k_{\lambda_{\vec{\xi}}}$ is very close to zero for each $\lambda_{\vec{\xi}}$.
To see this more clearly we construct the following bound

\begin{eqnarray*}k_{\lambda_{\vec{\xi}}}\max_{b_m}p_{\lambda_{\vec{\xi}}}(b_m|{\vec{y}^0_m}\vec{b}_{m-1})&\leq&G(y^0_m,P_{\textrm{obs}})-\sum_{\lambda'\neq\lambda}k_{\lambda'_{\vec{\xi}}}\max_{b_m}p_{\lambda'_{\vec{\xi}}}(b_m|{\vec{y}^0_m}\vec{b}_{m-1})\nonumber\\
&\leq&{G(y^0_m,P_{\textrm{obs}})-1/2(1-k_{\lambda_{\vec{\xi}}})}\end{eqnarray*}
where we used $\max_{b_m}p_{\lambda'_{\vec{\xi}}}(b_m|{\vec{y}^0_m}\vec{b}_{m-1})\geq{1/2}$ for each $\lambda'_{\vec{\xi}}$ and $\sum_{\lambda'\neq\lambda}k_{\lambda'_{\vec{\xi}}}=1-k_{\lambda_{\vec{\xi}}}$.
It follows that
\begin{eqnarray*}
G(y^0_m,P_{\textrm{obs}})-1/2\geq k_{\lambda_{\vec{\xi}}}[\max_{b_m}p_{\lambda_{\vec{\xi}}}(b_m|{\vec{y}^0_m}\vec{b}_{m-1})-1/2],
\end{eqnarray*}
and given Eq. \eqref{nut} this implies

\begin{eqnarray*}
G(y^0_m,P_{\textrm{obs}})-1/2\geq q_{\lambda_{\vec{\xi}}}p_{\lambda_{\vec{\xi}}}(b_1|y^0_1)\dots p_{\lambda_{\vec{\xi}}}(b_{m-1}|{\vec{y}^0_{m-1}}\vec{b}_{m-2}){[\max_{b_m}p_{\lambda_{\vec{\xi}}}(b_m|{\vec{y}^0_n}\vec{b}_{m-1})-1/2]}.
\end{eqnarray*}
Thus for sufficiently small $G(y^0_m,P_{\textrm{obs}})-1/2$ either $\max_{b_m}p_{\lambda_{\vec{\xi}}}(b_m|{\vec{y}^0_m}\vec{b}_{m-1})-1/2$ can be made arbitrarily small,
or the probability $q_{\lambda_{\vec{\xi}}}p_{\lambda_{\vec{\xi}}}(b_1|y^0_1)\dots p_{\lambda_{\vec{\xi}}}(b_{m-1}|{\vec{y}^0_{m-1}}\vec{b}_{m-2})$ that the distribution labelled by $\lambda_{\vec{\xi}}$ is sampled when $y^0_m$ is measured is made arbitrarily small.

The argument can be made for any $B_m$. For $B_1$, it follows that either $p_{\lambda_{\vec{\xi}}}(b_1|y^0_1)$ is made arbitrarily close to $1/2$ or $q_{\lambda_{\vec{\xi}}}$ is made arbitrarily close to $0$.
For $B_2$, it follows that either $p_{\lambda_{\vec{\xi}}}(b_2|y^0_2y^0_1b_1)$ is made arbitrarily close to $1/2$ or $q_{\lambda_{\vec{\xi}}}p_{\lambda_{\vec{\xi}}}(b_1|y^0_1)$ is made arbitrarily close to zero. Given the second option and that $p_{\lambda_{\vec{\xi}}}(b_1|y^0_1)$ is made arbitrarily close to $1/2$ it is implied that that $q_{\lambda(\vec{\xi})}$ is made arbitrarily close to 0. If on the other hand $p_{\lambda_{\vec{\xi}}}(b_1|y^0_1)$ is not very close to $1/2$ it follows that $q_{\lambda_{\vec{\xi}}}$ is made arbitrarily close to zero by the preceding argument.

By induction it is clear that either the term in Eq. \ref{kutt} satisfies that $p_{\lambda_{\vec{\xi}}}(b_1|y^0_1)\dots p_{\lambda_{\vec{\xi}}}(b_n|{\vec{y}^0_n}\vec{b}_{n-1})$ can be made arbitrarily close to ${2^{-n}}$ or alternatively $q_{\lambda_{\vec{\xi}}}$ is made arbitrarily small. Since the same is true for every $\lambda_{\vec{\xi}}$ in Eq. \ref{gutt2} it follows that $G({\vec{y}^0_n},P_{\textrm{obs}})$ can be made arbitrarily close to ${2^{-n}}$.

\section{Numerical bounds on the guessing probability}
\label{App5:numBound}

Let us now explain some numerical results that should provide some quantitative intuition on the relation between the amount of violation of the family of inequalities \eqref{Eq:Ibeta} and the amount of random bits certified by this violation. This allows one to evaluate how close the value $I_{\theta}$ of the inequality \eqref{Eq:Ibeta} should be to the maximal one $I_{\theta}^{max}$ in order to certify close to one perfect random bit from the statistics.

Let us consider the following two-parameter class of Bell inequalities 
%
\begin{equation}
I_{\alpha,\beta} := \beta\langle \mathbb{B}_0 \rangle + \alpha( \langle \mathbb{A}_0\mathbb{B}_0 \rangle + \langle \mathbb{A}_1\mathbb{B}_0 \rangle) + \langle \mathbb{A}_0\mathbb{B}_1 \rangle - \langle \mathbb{A}_1\mathbb{B}_1 \rangle\leq \beta+
2\alpha
\end{equation}
where $\alpha\geq 1$ and $\beta\geq 0$ such that $\alpha\beta<2$. For $\alpha=1$ the above class reproduces the Bell inequality \eqref{Eq:Ibeta} with
with $\beta = 2\cos(2\theta)/[1+\sin^2(2\theta)]^{1/2}$. In \cite{Acin2012} it was proved that the maximal quantum value $I_{\alpha,\beta}^{\textrm{max}}$ for this inequality is given by:
\begin{equation}\label{maxQ}
I_{\alpha,\beta}^{\textrm{max}} = \sqrt{(1+\alpha^2)(4+\beta^2)} 
\end{equation}

Now, we conjecture that the following inequality is obeyed by 
$I_{\theta}$:
\begin{equation}\label{nierownosc}
I_{\alpha,\beta}^2+(2-\alpha\beta)^2\langle \mathbb{B}_1\rangle^2\leq (1+\alpha^2)(4+\beta^2).
\end{equation}
We have numerically checked this inequality for various values of $\alpha$ and $\beta$ by maximizing 
its left-hand side over general one-qubit measurements $\mathbb{A}_i=\vec{m}_i\cdot\vec{\sigma}$
and $\mathbb{B}_i=\vec{n}_i\cdot\vec{\sigma}$ with $\vec{m}_i,\vec{n}_i\in\mathbb{R}^3$
such that $|\vec{m}_i|=|\vec{n}_i|=1$ for $i=0,1$, and two-qubit pure entangled states that 
can always be written as
\begin{equation}
\ket{\psi}=\cos t \ket{00}+\sin t\ket{11}
\end{equation}
with $t\in[0,\pi/2]$ being now independent of $\beta$. The obtained values were always smaller than or equal to the 
right-hand side of (\ref{nierownosc}). Notice that in the case of Bell scenarios with 
two dichotomic measurements one can always optimize expression like the above one over
qubit measurements and states (see e.g. Ref. \cite{Acin2012}).

From \eqref{nierownosc}, it is easy to obtain an upper bound on the expectation value:
\begin{equation}\label{numupper}
|\langle \mathbb{B}_1 \rangle | \leq \frac{\sqrt{(1+\alpha^2)(4+\beta^2)-I_{\alpha,\beta}^2}}{2-\alpha\beta} = \frac{\sqrt{(I_{\alpha,\beta}^{\textrm{max}})^2-I_{\alpha,\beta}^2}}{2-\alpha\beta},
\end{equation}
which, due to the fact that the right-hand side of the above is a concave function in $I_{\alpha,\beta}$, implies
an upper bound on the guessing probability: 
\begin{equation}\label{Pnumupper}
G(y^{0}=1,P_{obs}) \leq \frac{1}{2} + \frac{\sqrt{(I_{\alpha,\beta}^{\textrm{max}})^2-I_{\alpha,\beta}^2}}{2(2-\alpha\beta)}.
\end{equation}
In the particular case of maximal violation of the inequality (\ref{nierownosc}), 
this bound implies that the outcome of the first Bob's measurement is completely unpredictable, $G(y^{0}=1,P_{obs})=1/2$.
In other words, the maximal quantum violation of (\ref{nierownosc}) certifies one local perfectly random bit. Our numerical 
bound is thus tight at the maximal quantum violation of the inequality. Although it is not tight in general, our bound 
provides a good bound on the guessing probability in terms of the amount of violation of (\ref{nierownosc}).\\

%
For example, one can insert the maximal quantum value $I_\theta^{\textrm{max}}$ \eqref{maxQ} in \eqref{numupper} or in \eqref{Pnumupper} and get that $\langle \mathbb{B}_1 \rangle = 0$ or $G(y^{0}=1,P_{obs})=\frac{1}{2}$, which coincides with the certification of one perfect local random bit for input $y_0 = 1$ on Bob's side for the maximal violation of $I_{\theta}$. Our numerical bound is thus tight at the maximal violation of the inequality. Since the probability distribution of maximal violation is unique, the point is necessarily an extreme point \cite{Acin2012}, so we can directly use the observed guessing probability $P_{obs}$ to bound the eavesdropper's predictive power (as an extreme point allows only for one decomposition: itself). \\
\\
If we now want to use our function to bound the guessing probability \textit{inside} the set (not only at the point of maximal violation), and following the arguments of \citep{Acin2012}, one can check that the function $f(I_{\theta})$ bounding the $G(y^{0}=1,P_{obs})$ \eqref{Pnumupper} is a concave function of its variable $I_{\theta}$:
\begin{equation}\label{ConcavityProof}
\partial_{I_{\theta}}^2(f(I_{\theta})) = -\frac{2(4+\beta^2)}{(2(4+\beta^2)-I_{\theta}^2)^{\frac{3}{2}}} < 0
\end{equation}\\
where we used that both the numerator and denominator are positive from $I_{\theta} \le I_{\theta}^{\textrm{max}} = \sqrt{2(4+\beta^2)}$ \eqref{maxQ}. The bound can thus be extended to the points that do not necessarily violate maximally the inequality, and our bound $f(I_{\theta})$ can be used in our protocol for unbounded randomness certification from a single pair of qubits.\\

Let us finally consider the case of $\alpha=1$ and $\beta = 2\cos(2\theta)/[1+\sin^2(2\theta)]^{1/2}$, which results in 
the Bell inequality considered in the main text.  Figures \ref{Fig:CHSHgraph} and \ref{piGRAPHS}
present the bound (\ref{Pnumupper}) for three values of $\theta$, in particular for $\theta=\pi/4$
which corresponds to the CHSH Bell inequality. This should provide one with an intuition of how close quantitatively to the maximal violation $I_{\theta}^{max}$ the observed value 
$I_{\theta}$ should be in order to get close to one perfect local bit of randomness ($G(y=1,P_{obs}) \rightarrow 1/2$) for a state with a given angle $\theta$.


\begin{figure}[h!]
\scalebox{0.25}{\includegraphics{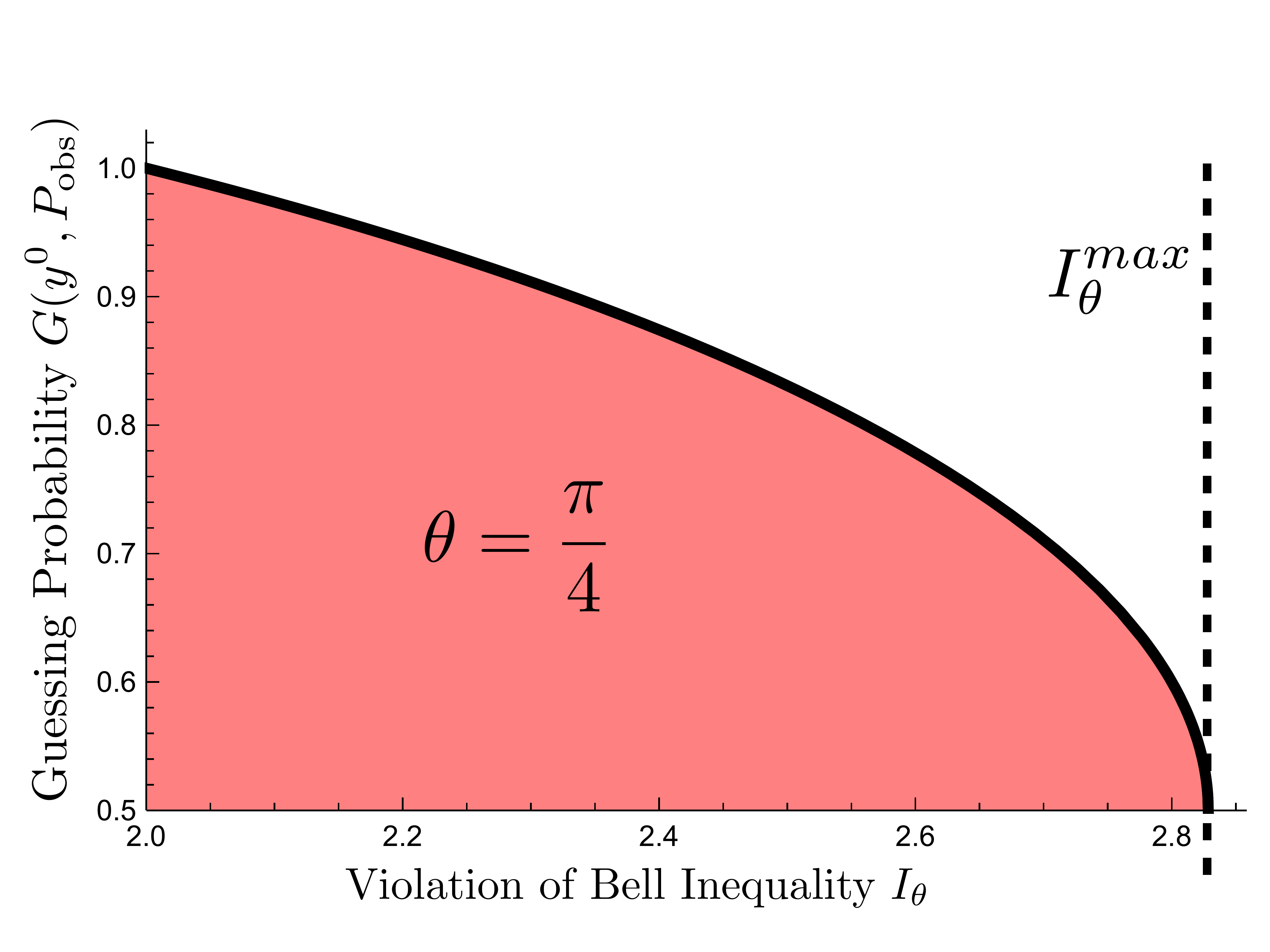}}
\caption{\label{Fig:CHSHgraph}
The upper bound on the guessing probability in function of the violation of $I_{\theta=
\pi/4} = $ CHSH, maximally violated by the maximally two qubit entangled state $
\theta=\pi/4$ in \eqref{Eq:qubits}.}
\end{figure}

\begin{figure*}[h!]
\scalebox{0.25}{\includegraphics{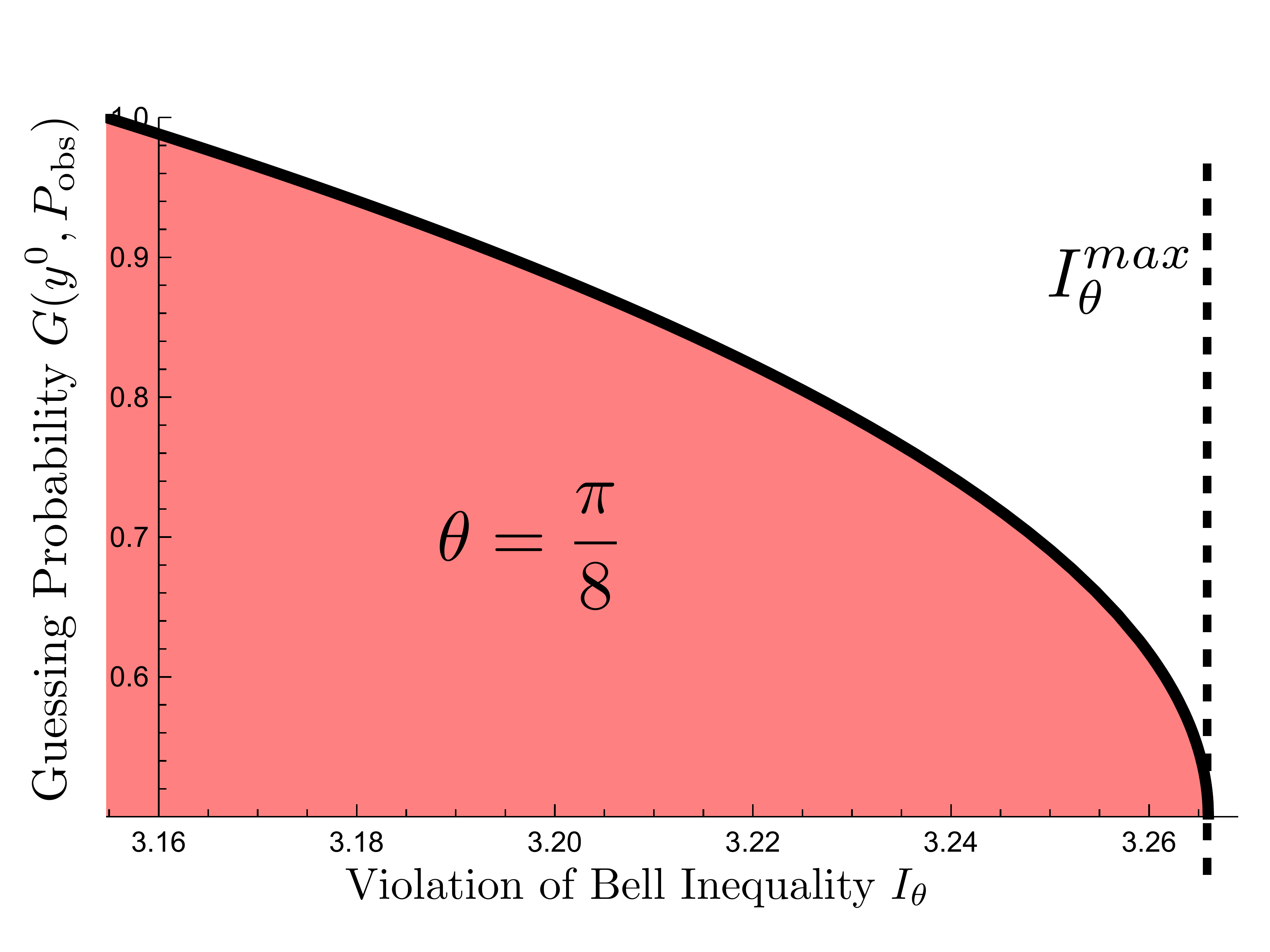}\includegraphics{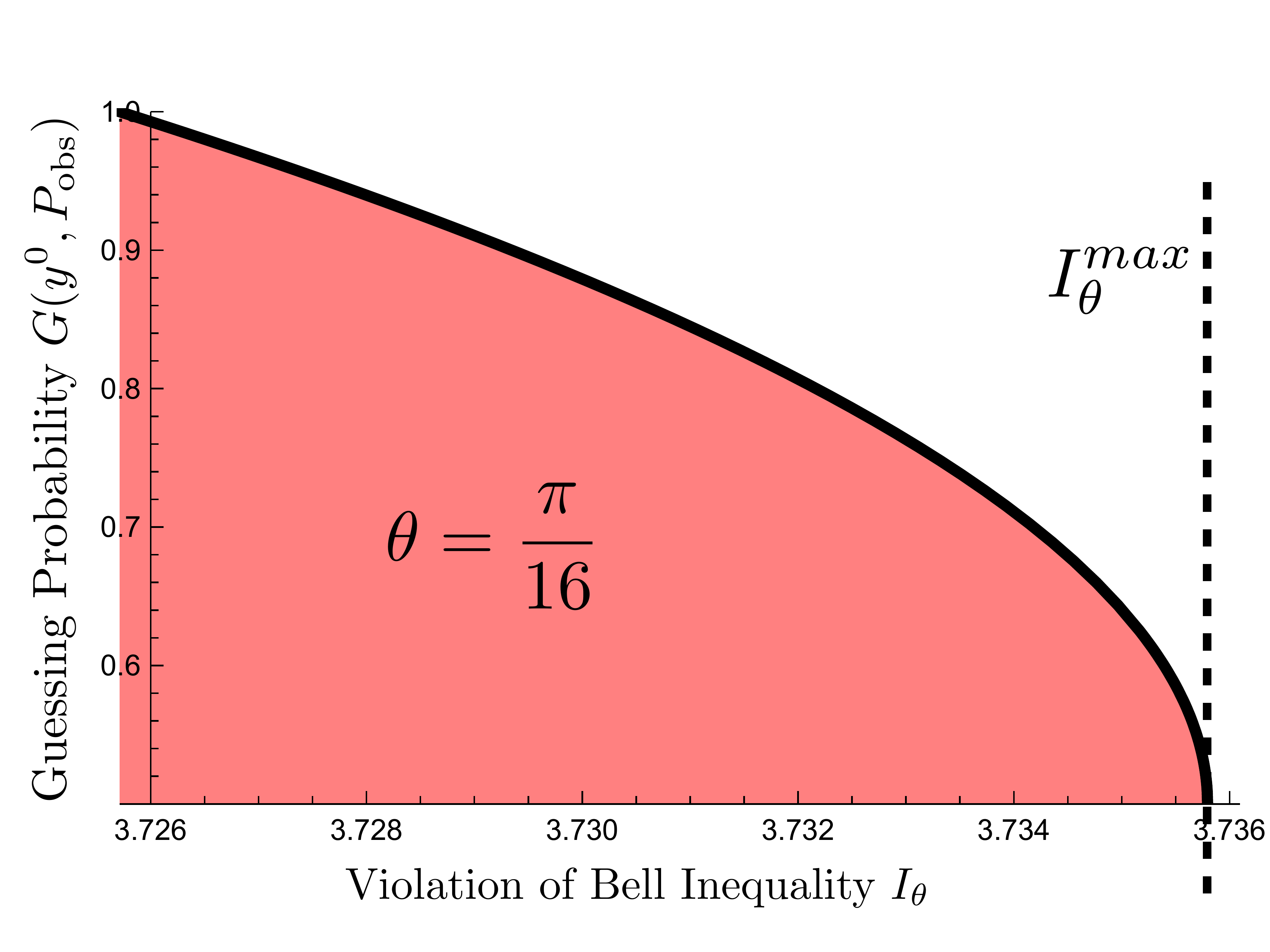}}
\caption{\label{piGRAPHS}
The upper bound on the guessing probability, this time in function of the violation of 
$I_{\theta=\pi/8}$ and $I_{\theta=\pi/16}$.}
\end{figure*}

\newpage
\bibliography{bibliography}


\end{document}